\newcommand\blfootnote[1]{%
	\begingroup
	\renewcommand\thefootnote{}\footnote{#1}%
	\addtocounter{footnote}{-1}%
	\endgroup
}
\begin{document}  
\captionsetup{font=small, textfont=it}
\title{Algorithms and Bounds for Drawing Directed Graphs}
\author{Giacomo Ortali \inst{*} \and Ioannis G. Tollis \inst{**}}
\institute{University of Perugia \email{giacomo.ortali@gmail.com} \and  Computer Science Department, University of Crete, Heraklion, Crete, Greece and \\Tom Sawyer Software, Inc. Berkeley, CA 94707 U.S.A.\email{tollis@csd.uoc.gr}}
\maketitle 
\begin{abstract}
	In this paper we present a new approach to visualize directed graphs and their hierarchies that completely departs from the classical four-phase framework of Sugiyama and computes readable hierarchical visualizations that contain the complete reachability information of a graph. Additionally, our approach has the advantage that only the necessary edges are drawn in the drawing, thus reducing the visual complexity of the resulting drawing. Furthermore, most problems involved in our framework require only polynomial time.  Our framework offers a suite of solutions depending upon the requirements, and it consists of only two steps: (a) the cycle removal step (if the graph contains cycles) and (b) the channel decomposition and hierarchical drawing step. Our framework does not introduce any dummy vertices and it keeps the vertices of a channel \emph{vertically aligned}. 
	The time complexity of the main drawing algorithms of our framework is $O(kn)$,  where $k$ is the number of channels, typically much smaller than $n$ (the number of vertices).
\end{abstract}
\section{Introduction}
The visualization of directed (often acyclic) graphs is very important for many applications in several areas of research and business. This is the case because such graphs often represent hierarchical relationships between objects in a structure (the graph).  In their seminal paper of 1981, Sugiyama, Tagawa, and Toda~\cite{DBLP:journals/tsmc/SugiyamaTT81} proposed a four-phase framework for producing hierarchical drawings of directed graphs. This framework is known in the literature as the "Sugiyama" framework, or algorithm. Most problems involved in the optimization of various phases of the Sugiyama framework are NP-hard. In this paper we present a new approach to visualize directed graphs and their hierarchies that completely departs from the classical four-phase framework of Sugiyama and computes readable hierarchical visualizations that contain the complete reachability information of a graph. \blfootnote{* This author's research was performed in part while he was visiting the University of Crete.} Additionally, our approach has the advantage that only the necessary edges are drawn in the drawing, thus reducing the visual complexity of the resulting drawing. Furthermore, most problems involved in our framework require polynomial time.

Let $G=(V,E)$ be a directed graph with $n$ vertices and $m$ edges.
The Sugiyama Framework for producing hierarchical drawings of directed graphs  consists of four main phases~\cite{DBLP:journals/tsmc/SugiyamaTT81}: (a) Cycle Removal, (b) Layer Assignment, (c) Crossing Reduction, and (d) Horizontal Coordinate Assignment.
The reader can find the details of each phase and several proposed algorithms to solve various of their problems and subproblems in Chapter 9 of the Graph Drawing book of \cite{DBLP:books/ph/BattistaETT99}. Other books have also devoted significant portions of their Hierarchical Drawing Algorithms chapters to the description of this framework~\cite{KW,handbook}.

The Sugiyama framework has also been extensively used in practice, as manifested by the fact that various systems have chosen it to implement hierarchical drawing techniques. Several systems such as \emph{AGD} \cite{DBLP:journals/spe/PaulischT90}, \emph{da Vinci} \cite{DBLP:conf/gd/FrohlichW94}, \emph{GraphViz} \cite{DBLP:journals/spe/GansnerN00}, \emph{Graphlet} \cite{DBLP:journals/spe/Himsolt00}, \emph{dot} \cite{gansner2006drawing}, and others implement this framework in order to hierarchically draw directed graphs.  Even commercial software such as the Tom Sawyer Software TS Perspectives \cite{Tom} and yWorks \cite{yWorks} essentially use this framework in order to offer automatic hierarchical visualizations of directed graphs.  More recent information regarding the Sugiyama framework and newer details about various algorithms that solve its problems and subproblems can be found in~~\cite{handbook}.

Even tough this framework is very popular, it has several limitations: as discussed above, most problems and subproblems that are used to optimize the results of each phase have turned out to be NP-hard. Several of the heuristics employed to solve these problems give results that are not bounded by any approximation. Additionally, the required manipulations of the graph often increase substantially the complexity of the graph itself (such as the number of dummy vertices in phase b can be as high as $O(nm)$). The overall time complexity of this framework (depending upon implementation) can be as high as $O((nm)^2)$, or even higher if one chooses algorithms that require exponential time.  Finally, the main limitation of this framework is the fact that the heuristic solutions and decisions that are made during previous phases (e.g., crossing reduction) will influence severely the results obtained in later phases. Nevertheless, previous decisions cannot be changed in order to obtain better results.\\

In this paper we propose a new framework that departs completely from the typical Sugiyama framework and its four phases. Our framework is based on the idea of partitioning the vertices of a graph $G$ into \emph{channels}, that we call \emph{channel decomposition} of $G$.  Namely, after we partition the vertices of $G$ into channels, we compute a new graph $Q$ which is closely related to $G$ and has the same reachability properties as $G$.  The new graph consists of the vertices of $G$, \emph{channels edges} that connect vertices that are in the same channel, and \emph{cross edges} that connect vertices that belong to different channels. Our framework draws either (a) graph $G$ without the transitive "channel edges" or (b) a condensed form of the transitive closure of $G$.  Our idea is  to compute a hierarchical drawing of $Q$ and, since $Q$ has the same reachability properties as $G$, this drawing contains most edges of $G$ and gives us all the reachability information of $G$.  The "missing" incident edges of a vertex can be drawn interactively on demand by placing the mouse on top of the vertex and its incident edges will appear at once in red color. 

Our framework offers a suite of solutions depending upon the requirements of the user, and it consists of only two steps: (a) the cycle removal step (if the graph contains cycles) and (b) the channel decomposition and hierarchical drawing step. Our framework does not introduce any dummy vertices, keeps the vertices of a channel \emph{vertically aligned} and it offers answers to reachability queries between vertices by traversing at most one cross edge.   
Let $k$ be the number of channels  and $m'$ be the number of cross edges in $Q$.  We show that $m'=O(nk)$. The number of bends we introduce is at most $O(m')$ and the required area is at most $O(nk)$. The number of crossings between cross edges and channels can be minimized in $O(k!k^2)$ time, which is reasonable for small $k$.  If $k$ is large, we present linear-time heuristics that find a small number of such crossings.  The total time complexity of the algorithms of our framework is $O(kn)$ plus the time required to compute the channel decomposition of $G$, which depends upon the type of channel decomposition required.

Our paper is organized as follows: the next section presents necessary preliminaries including a brief description of the phases of the Sugiyama framework, the time complexity of the phases, and a description of "bad" choices. In Section 3 we present the concept of path decomposition of a DAG and of path graph (i.e., when the channels are required to be paths of $G$) and we present the new algorithm for hierarchical drawing which is based on any (computed) path decomposition of a DAG. Section 4 presents the concepts of channel decomposition of a DAG and of channel graph (where channels are not paths) and the new algorithm for hierarchical drawing which is based on any (computed) channel decomposition of a DAG.  In Section 5 we present the properties of the drawings obtained by our framework, we offer comparisons with the drawings obtained by traditional techniques, and present our conclusions. Due to space limitations, we present the techniques on minimizing the number of crossings between cross edges and channels in the Appendix.
\section{Sugiyama Framework}
Let $G=(V,E)$ be a directed graph with $n$ vertices and $m$ edges.
A \emph{Hierarchical drawing} of $G$ requires that all edges are drawn in the same direction upward (downward, rightward, or leftward) monotonically. If $G$ contains cycles this is clearly not possible, since in a drawing of the graph some edges have to be oriented backwards. The Sugiyama framework contains the Cycle Removal Phase in which a (small) subset of edges is selected and the direction of these edges is reversed. Since it is important to maintain the character of the input graph, the number of the selected edges has to be minimum.  This is a well known NP-hard problem, called the \emph{Feedback Arc Set} problem. A well known approximation algorithm, called \emph{Greedy-Cycle-Removal}, runs in linear time and produces sets that contain at most $m/2 - n/6$ edges. If the graph is sparse, the result is further reduced to $m/3$ edges~\cite{DBLP:books/ph/BattistaETT99}.\\
\indent
Since the input graph $G$ may contain cycles our framework also needs to remove or absorb them. One approach is to use a cycle removal algorithm (similar to Sugiyama's first step) but instead of reversing the edges, we propose to remove them, since reversing them could lead to an altered transitivity of the original graph.  This is done because the reversed edge will be a transitive edge in the new graph and hence it may affect the drawing.  By the way, this is another disadvantage of Sugiyama's framework.    Since the removal and/or reversal of such edges will create a graph that will have  a "different character" than the original graph we propose another possibility that will work well if the input graphs do not contain long cycles. It is easy to (a) find the \emph{Strongly Connected Components (SCC)} of the graph in linear time, (b) cluster and collapse each SCC into a supernode, and then the resulting graph $G'$ will be acyclic. 
Even if both techniques are acceptable, we believe that the second one might be able to better preserve the character of the input graph.  On the other hand, this technique would not be useful if most vertices of a graph are included in a very long cycle. From now on, we assume that the given graph is acyclic after using either of the techniques described above.\\
\indent
In the Layer Assignment Phase of the Sugiyama framework the vertices are assigned to a layer and the layering is made \emph{proper}, see~\cite{DBLP:books/ph/BattistaETT99,handbook,DBLP:journals/tsmc/SugiyamaTT81}. In other words, long edges that span several layers are broken down into many smaller edges by introducing dummy vertices, so that every edge that starts at a layer terminates at the very next layer. Clearly, in a graph that has a longest path of length $O(n)$ and $O(m)$ transitive edges, the number of dummy vertices can be as high as $O(nm)$. This fact will impact the running time (and space) of the subsequent phases, with heaviest impact on the next phase, Crossing Reduction Phase.\\
\indent
The Crossing Reduction Phase is perhaps the most difficult and most time-consuming phase. It deals with various difficult problems that have attracted a lot of attention both by mathematicians and computer scientists. It is outside the scope of this paper to describe the various techniques for crossing reduction, however, the reader may see~\cite{DBLP:books/ph/BattistaETT99,handbook} for further details. The most popular technique for crossing reduction is the \emph{Layer-by-Layer Sweep}~ \cite{DBLP:books/ph/BattistaETT99,handbook}. This technique solves multiple problems of the well known \emph{Two-Layer-Crossing Problem} by considering the layers in pairs going up (or down).  Of course, a solution for a specific two layer crossing problem "fixes" the relative order of the
vertices (real and dummy) for the next two layer crossing problem, and so on. Therefore, "bad" choices may propagate. Please notice that each two layer crossing problem is NP-complete~\cite{DBLP:journals/algorithmica/EadesW94}. The heuristics employed here tend to reduce crossings by various techniques, but notice that the number of crossings may be as high as $O(M'^2)$, where
$M'$ is the number of edges between the vertices of two adjacent layers.\\
\indent
Finally, in the last phase the exact $x$-coordinates of the vertices are computed by quadratic-programming techniques~\cite{DBLP:books/ph/BattistaETT99,handbook}, which require considerable computational resources.
The dummy vertices are replaced by bends. This implies that the number of bends is about equal to the number of dummy vertices (except when the edge segments are completely aligned).
\begin{comment}
Let $G=(V,E)$ be a Directed Acyclic Graph (DAG) or an $st$-graph. A \emph{channel} $C$ is an ordered set of vertices such that any $u\in C$ has a path to each of its successors in $C$. A \emph{channel decomposition} $S=C_1,...,C_k$ is a partition of the
vertex set $V$ of the graph into channels. If vertex $v$ belongs to channel $C_i$ we write $v_i^j$, if $v$ is the $j$th vertex of channel $C_i$.

The global idea of our framework is that the vertices of each channel are assigned the same $x$-coordinate, while they are placed in successive $y$-coordinates. Therefore, the vertices of each channel are vertically aligned. In order to complete the drawing of the graph, we need to draw the cross edges. These are edges that go from a vertex $v_i^j$ of channel $C_i$ to a vertex $u=(k,l)$ of channel $C_j$. The implication of the existence of such a cross edge $(v,u)$ is that every vertex at or below index $j$ in channel $C_i$ can reach every vertex at, or above index $l$  in channel $C_k$. Because of this implication, there is no reason to (and in fact we do not) draw cross edges from any vertex below vertex $v$ of channel $C_i$ to any vertex above vertex $u$ in channel $C_k$. In the following sections we will discuss how to compute efficient channel decompositions, minimum number of cross edges, and how to draw the final result.
\end{comment}
\section{Path Constrained Hierarchical Drawing}
Let $G=(V,E)$ be a DAG.
In this paper we define a \emph{path decomposition} of $G$ as a set of vertex-disjoint paths $S_p= \{P_1,...,P_k\}$ such that $V(P_1),...,V(P_k)$ is a partition of $V(G)$. A path $P_h\in S_p$ is called a \emph{decomposition path}. The vertices in a decomposition path are clearly ordered in the path, and we denote by $v_i^j$ the fact that $v$ is the jth vertex of path $P_i$. The \emph{path decomposition graph}, or simply path graph, of $G$ associated with path decomposition $S_p$ is a graph $H=(V,A)$ such that $e=(u,v)\in A$ if and only if $e\in E$ and (a) $u,v$ are consecutive in a path of $S_p$ (called \emph{path edges}) or (b) $u$ and $v$ belong to different paths (called \emph{cross edges}).
In other words, an edge of $H$ is a \emph{path edge} if it connects two consecutive vertices of the same decomposition path, else it is a \emph{cross edge}.
Notice that the edges belonging to $G$ but not to $H$ are transitive edges between vertices of the same path of $G$. 

A \emph{path constrained hierarchical drawing} (PCH drawing) $\Gamma$ of $G$ given $S_p$ is a hierarchical drawing of $H$ such that two vertices are drawn on the same vertical line (i.e., same $x$-coordinate) if and only if they belong to the same decomposition path. In this section we propose an algorithm that computes PCH drawings assigning to each vertex the $x$-coordinate of the path it belongs to and for $y$-coordinate we will use its rank in a topological sorting. We will prove that this assignment lets us obtain good results in terms of both area and number of bends.

Next we present Algorithm PCH-Draw that computes a PCH drawing $\Gamma$ of $G$ such that every edge of $G$ bends at most once. We denote by $X(P_h)$ the x-coordinate of Path $P_h$ and by $X(v),Y(v)$ the x-coordinate and the y-coordinate of any vertex $v$. %Firstly PCH-Draw inizializes the x-coordinates of every path of $S_p$ (Steps 1-3). 
Let $P_v$ be the path of $S_p$ containing $v$. By definition of PCH drawing we have that $X(v)=X(P_v)$. Suppose that the vertices of $G$ are topologically ordered and let $T(v)$ be the position of $v$ in a topological order of $V$. PCH-Draw associates to every path, and consequently to every vertex of the path, an x-coordinate that is an even number and to every vertex a y-coordinate that corresponds to its topological order, i.e., $Y(v)=T(v)$ (Steps 1-4). The algorithm draws every edge $e=(u,v)$ as a straight line if the drawn edge doesn't intersect a vertex $w$ different from $u$ and $v$ in $\Gamma$ (Steps 5-7). Otherwise it draws edge $e$ with one bend $b_e$ such that: its x-coordinate $X(b_e)$ is equal to $X(u)+1$  if $X(u)<X(v)$, or $X(u)-1$ if $X(u)>X(v)$. The y-coordinate of bend $b_e$ $Y(b_e)$ is equal to $Y(v)-1$ (Steps 8-14). \\\\
\textbf{Algorithm} PCH-Draw($G=(V,E)$, $S_p=\{P_1, P_2,..., P_k\}$, $H=(V,A)$)\\
1. \indent \textbf{For} $i=1$ to $k$ do\\
2. \indent \indent $X(P_i)=2i$\\
3. \indent \textbf{For} any $v\in V$\\
4. \indent \indent $(X(v),Y(v))=(X(P_v),T(v))$\\
5. \indent \textbf{For} any $e=(u,v)\in A$\\
6. \indent \indent \textbf{If} the straight line drawing of $e$ does not intersect a vertex different \indent \ \ \ \ \ \ \ \ from $u,v$:\\
7. \indent \indent \indent Draw $e$ as a straight line\\
8. \indent \indent \textbf{Else}:\\
9. \indent \indent \ \ \ \ \ \textbf{If} $X(u)<X(v)$:\\
10.\indent \indent \indent \indent $X(b_e)=X(u)+1$\\
11.\indent \indent \indent \textbf{Else}:\\
12.\indent \indent \indent \indent $X(b_e)=X(u)-1$\\
13.\indent \indent \indent $Y(b_e)=Y(v)-1$\\
14.\indent \indent \indent Draw $e$ with one bend at point $(X(b_e),Y(b_e))$\\\\
\noindent
In Figure \ref{figure:9030} we show an example of a drawing computed by Algorithm PCH-Draw. In (a) we show the drawing of a graph $G$ as computed by Tom Sawyer Perspectives (a tool of Tom Sawyer Software) which follows the Sugiyama Framework. In (b) we show the drawing $\Gamma$ of $H$ computed by Algorithm PCH-Draw. The path decomposition that we used to compute the drawing is $S_p=\{P_1,P_2,P_3\}$, where: $P_1=\{0,1,4,7,12,13,15,16,17,20,22,24,25,26,29,30\}$; $P_2=\{2,5,9,11,$ $23,27\}$; $P_3=\{3,6,8,10,14,18,19,21,28\}$. Edge $e=(21,25)$ is the only one bending. In grey we show edge $e$ drawn as straight line, intersecting vertex $23$. 
\begin{figure}[ht]
	\centering
	\subfigure[]
	{\includegraphics[width=0.37\linewidth]{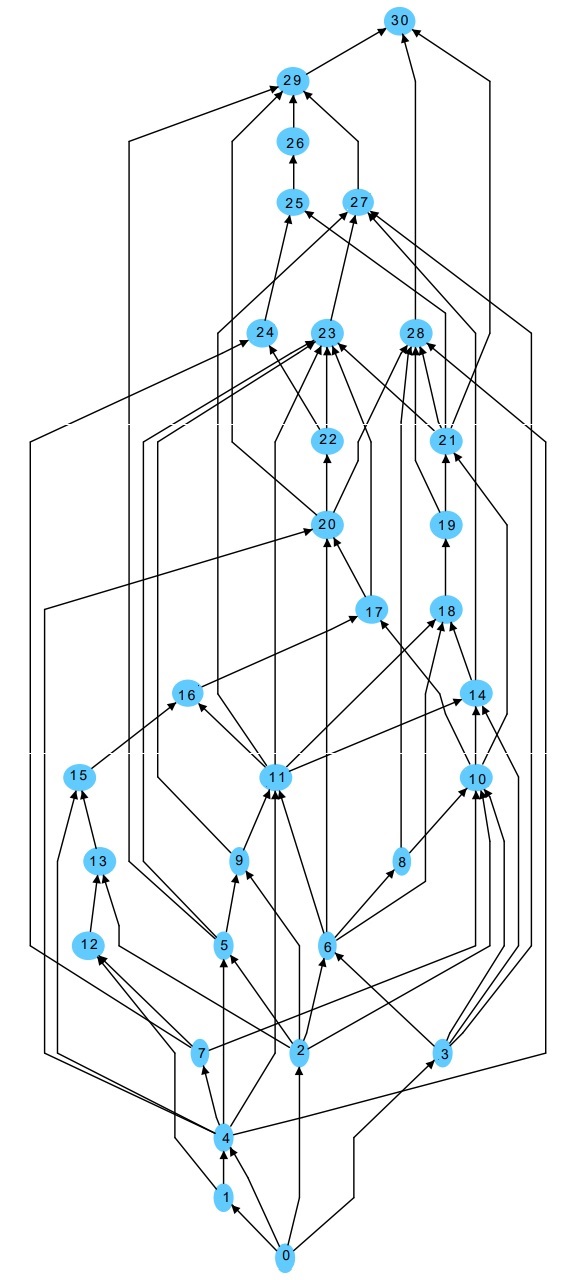}}
	\hspace{15mm}
	\subfigure[]
	{\includegraphics[width=0.27\linewidth]{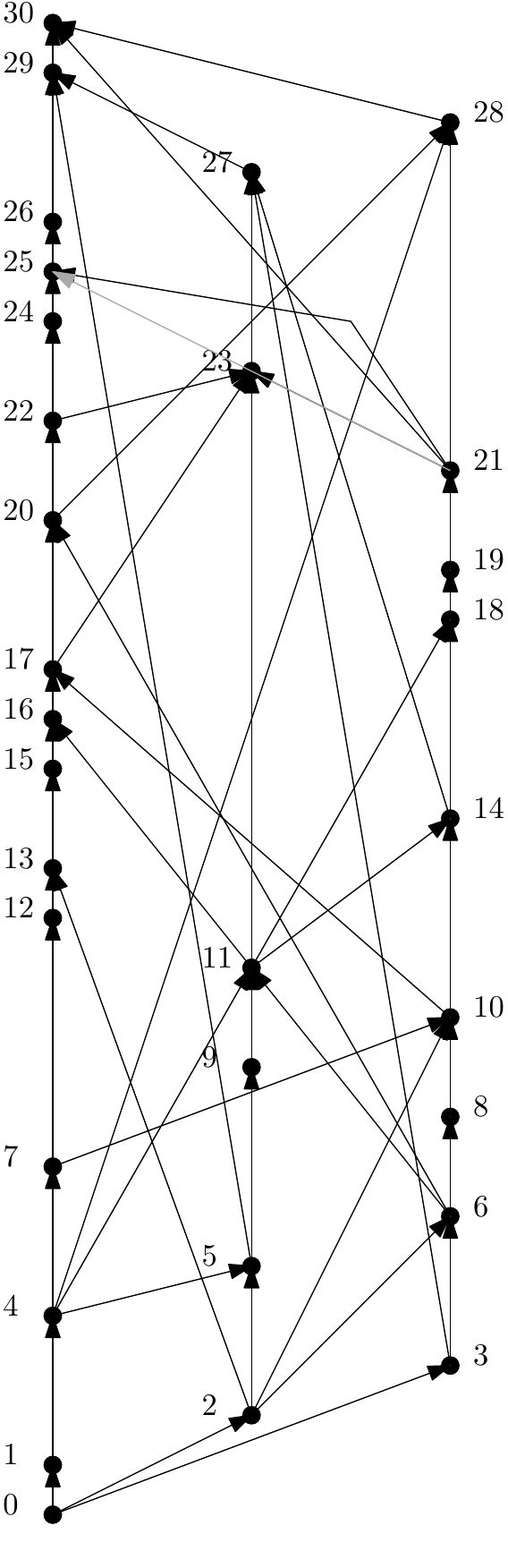}}
	\caption{(a)Drawing of a dag $G$ computed by Tom Sawyer Perspectives (a tool of Tom Sawyer Software) (b) PCH drawing of $H$ computed by Algorithm PCH-draw.}
	\label{figure:9030}
\end{figure}  

Any drawing $\Gamma$ computed by Algorithm PCH-Draw has several interesting properties. First, the area of  $\Gamma$ is typically less than $O(n^2)$. 	By construction, $\Gamma$ has height $n-1$ and width of $2k-1$.
Hence 	$Area(\Gamma)=O(kn)$. 
Given $S_p$ and the topological order of the vertices of $G$, every vertex need O(1) time to be placed. Every edge $e=(u,v)$ needs $O(k)$ time to be placed, since before drawing it we need to check if its straight line drawing would intersect a vertex different from $u,v$ (Step 6). Since the drawing of $e$ must be monotonous, it can intersect at most one edge per path, so we just need to check if in correspondence of every path placed between the path of $u$ and $v$ in $\Gamma$ the drawing of $e$ intersects some vertex.  Hence we have:
\begin{theorem}
	\label{theorem:area}
	Algorithm PCH-Draw computes a drawing 	$\Gamma$ of a DAG $G$ in $O(n+mk)$ time.  Furthermore, 	$Area(\Gamma)=O(kn)$.
\end{theorem}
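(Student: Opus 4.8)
The plan is to prove the two assertions separately: the $O(n+mk)$ running time, by charging the cost of each block of Algorithm PCH-Draw, and $Area(\Gamma)=O(kn)$, by pinning down the height and width of $\Gamma$. For the time bound, I would first observe that computing a topological order of $V$ (so that $T$ is defined) costs $O(n+m)$, which already fits within the claimed bound; then Steps~1--2 cost $O(k)$, and Steps~3--4 cost $O(1)$ per vertex given $S_p$ and $T$, hence $O(n)$ in total. The loop of Steps~5--14 runs once per edge of $A$, so, since $|A|\le m$, it suffices to show each iteration takes $O(k)$ time; the only non-constant work in an iteration is the intersection test of Step~6, because drawing $e$ either straight or with its single prescribed bend is $O(1)$.

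The heart of the argument --- and the one step I expect to need real care --- is bounding Step~6 by $O(k)$, which rests on a monotonicity observation. Since $e=(u,v)\in A\subseteq E$, vertex $u$ precedes $v$ in the topological order, so $Y(u)=T(u)<T(v)=Y(v)$, and hence the straight segment $uv$ is strictly monotone in $y$ and monotone in $x$. Consequently, for every decomposition path $P_h$ whose column $x=X(P_h)$ lies strictly between $X(u)$ and $X(v)$, the segment meets that column in exactly one point, so it can overlap at most one vertex of $P_h$; moreover it overlaps no vertex of $P_u$ or $P_v$ other than $u$ and $v$. As there are fewer than $k$ such intermediate columns, the test inspects $O(k)$ candidate grid points, and --- after an $O(n)$ preprocessing that records which vertex, if any, occupies each grid point --- each is resolved in $O(1)$, so Step~6 runs in $O(k)$ time. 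Summing over all edges gives $O(mk)$, and the grand total is $O(n+m)+O(k)+O(n)+O(mk)=O(n+mk)$.

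Finally, for the area I would note that $Y(v)=T(v)\in\{1,\dots,n\}$ for every vertex and that each bend has $y$-coordinate $Y(v)-1\ge 1$ (the topological source has no incoming edge and creates no bend), so every $y$-coordinate lies in $[1,n]$ and the height of $\Gamma$ equals $n-1$; likewise $X(P_i)=2i\in\{2,4,\dots,2k\}$ and each bend has $x$-coordinate $X(u)\pm 1$, an odd number strictly between two consecutive path columns, so every $x$-coordinate lies in $[2,2k]$, taking at most $2k-1$ distinct values, and the width is at most $2k-1$. Hence $Area(\Gamma)\le (n-1)(2k-1)=O(kn)$. As a short remark (not needed for the two stated bounds) I would also verify that $\Gamma$ is a legal drawing: each of the two segments of a bent edge lies in a vertical strip of width $1$ having the column of $u$ as one boundary, or in a horizontal strip of height $1$ having the row of $v$ as one boundary, and since no vertex has an odd $x$-coordinate or a non-integer $y$-coordinate, neither segment passes through a vertex other than $u$ or $v$.
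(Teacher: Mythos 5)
Your proof is correct and follows essentially the same route as the paper: the area bound comes from height $n-1$ and width $O(k)$, and the $O(mk)$ term comes from the observation that a $y$-monotone straight segment can meet at most one vertex per intermediate path, so Step~6 costs $O(k)$ per edge. Your additions (the explicit $O(n+m)$ cost of the topological sort, the grid-occupancy lookup table, and the legality check for bent edges) merely make explicit details the paper leaves implicit.
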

\noindent
The proofs of Lemma \ref{lemma:no_intersection} and Lemma \ref{figure:overlap} are in the Appendix:
\begin{lemma}
	\label{lemma:no_intersection}
	A cross edge $e=(u,v)$ does not intersect a vertex different from $u$ and $v$ in $\Gamma$.
\end{lemma}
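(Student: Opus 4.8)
The plan is to case-split on how Algorithm PCH-Draw renders the cross edge $e=(u,v)$, exploiting the coordinate parities built into the construction. Since $e$ is a cross edge, $u$ and $v$ lie on distinct decomposition paths $P_i,P_j$, so $X(u)=2i\neq 2j=X(v)$ and both $x$-coordinates are even; and since $(u,v)\in E$ while the $y$-coordinates form a topological numbering, $Y(u)=T(u)<T(v)=Y(v)$. I will use throughout that every vertex of $\Gamma$ occupies a point with \emph{even} $x$-coordinate and \emph{integer} $y$-coordinate. If $e$ is drawn as a straight segment (Steps~6--7), there is nothing to prove: the test in Step~6 is exactly the assertion that this segment avoids every vertex other than $u$ and $v$. (For a cross edge this segment is moreover non-vertical, so it can only meet vertices on paths strictly between $P_i$ and $P_j$, precisely the paths Step~6 inspects; hence the test is also necessary there, not merely sufficient.)

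Otherwise $e$ is drawn with one bend at $b_e=(X(u)\pm1,\,Y(v)-1)$ (Steps~9--14), and I would handle the two segments $s_1=ub_e$ and $s_2=b_ev$ separately. First note that $X(b_e)$ is odd, so $b_e$ is not a vertex and is distinct from $u$ and $v$. For $s_1$: its $x$-coordinates fill the closed interval with endpoints $X(u)$ and $X(u)\pm1$, whose only even point is $X(u)$; since $s_1$ is non-vertical, it meets the line $x=X(u)$ only at $u$, so it carries no vertex other than $u$. For $s_2$: its $y$-coordinates fill the closed interval $[Y(v)-1,\,Y(v)]$, whose only integer points are $Y(v)-1$ and $Y(v)$; since $s_2$ is non-horizontal, it meets the row $y=Y(v)$ only at $v$ and the row $y=Y(v)-1$ only at $b_e$, and $b_e$ is not a vertex. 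Hence $s_2$ carries no vertex other than $v$, and together the two segments give the claim.

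The only genuinely load-bearing step is the analysis of $s_2$: a priori that segment sweeps across every path between $P_i$ and $P_j$, so one might fear it grazes a vertex on some intermediate path. The resolution is that pinning the bend's ordinate at exactly $Y(v)-1$ confines all of $s_2$ to a horizontal strip of height one between two consecutive integer rows, and the only points of $s_2$ lying on those rows are its two endpoints $b_e$ and $v$ — and $b_e$ is not a vertex. The remaining ingredients (the $x$-coordinate bookkeeping for $s_1$, and the non-verticality of $s_1$ and non-horizontality of $s_2$, which follow from $X(b_e)$ being odd and $Y(b_e)\neq Y(v)$) are routine.
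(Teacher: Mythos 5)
Your proof is correct and follows essentially the same route as the paper's: the paper argues that the two segments $(u,b_e)$ and $(b_e,v)$ are diagonals of rectangles containing no vertices, and your parity bookkeeping (the first rectangle has width one between an even and an odd $x$-coordinate, the second has height one between consecutive integer $y$-coordinates) is just a more explicit justification of that same geometric fact. Your version is in fact slightly more careful, since it also covers the degenerate case $Y(u)=Y(v)-1$ where the first ``rectangle'' collapses to a horizontal segment.
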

\begin{lemma}
	\label{lemma:overlap}
	Let $e=(u,v)$ and $e'=(u',v')$ be two cross edges drawn with a bend in $\Gamma$. Their bends are placed in the same point if and only if $u$ and $u'$ are in the same decomposition path and $v=v'$.
\end{lemma}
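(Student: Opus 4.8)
The plan is to reduce the geometric claim to the explicit coordinate formulas that Algorithm PCH-Draw uses for bend points, and then argue the two implications separately. Recall from Steps~9--14 that if the cross edge $e=(u,v)$ is drawn with a bend $b_e$, then $Y(b_e)=Y(v)-1$, while $X(b_e)=X(u)+1$ when $X(u)<X(v)$ and $X(b_e)=X(u)-1$ when $X(u)>X(v)$. Since $e$ is a cross edge, $u$ and $v$ lie on distinct decomposition paths, and distinct paths receive distinct (even) x-coordinates by Step~2, so $X(u)\neq X(v)$ and $b_e$ is well defined. The equality $b_e=b_{e'}$ is exactly the conjunction $Y(b_e)=Y(b_{e'})$ and $X(b_e)=X(b_{e'})$, which I will treat one at a time.

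For the ``if'' direction, assume $u,u'$ lie on the same path (so $X(u)=X(u')$) and $v=v'$ (so $X(v)=X(v')$ and $Y(v)=Y(v')$). Then $X(u)-X(v)$ and $X(u')-X(v')$ have the same sign, so both bends take the same $\pm 1$ offset from the common value $X(u)=X(u')$; hence $X(b_e)=X(b_{e'})$. Also $Y(b_e)=Y(v)-1=Y(v')-1=Y(b_{e'})$. Therefore $b_e=b_{e'}$.

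For the ``only if'' direction, suppose $b_e=b_{e'}$. From $Y(b_e)=Y(b_{e'})$ we get $Y(v)=Y(v')$, i.e. $T(v)=T(v')$; since the topological numbering of Step~4 assigns pairwise distinct ranks, this forces $v=v'$, and in particular $X(v)=X(v')$. It remains to deduce that $u$ and $u'$ are on the same path from $X(b_e)=X(b_{e'})$, which I do by a case analysis on the bend directions. If the two edges bend in the same horizontal direction (either $X(u),X(u')<X(v)$, or $X(u),X(u')>X(v)$), then $X(b_e)=X(b_{e'})$ directly yields $X(u)=X(u')$, so $u$ and $u'$ share a path because the assignment $i\mapsto 2i$ of Step~2 is injective.

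The remaining, ``mixed'', case is the step I expect to be the main obstacle. Say $X(u)<X(v)<X(u')$ (the other sub-case is symmetric). Then $X(b_e)=X(u)+1$ and $X(b_{e'})=X(u')-1$, so $X(b_e)=X(b_{e'})$ would give $X(u')=X(u)+2$; but then no even integer lies strictly between $X(u)$ and $X(u')$, contradicting $X(u)<X(v)<X(u')$ with $X(v)$ even. Hence the mixed case is impossible, every surviving case puts $u$ and $u'$ on the same path, and together with $v=v'$ this proves the lemma. The two facts doing the real work are the parity of the path x-coordinates (Step~2) and the injectivity of the topological order (Step~4).
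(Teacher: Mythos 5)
Your proof is correct and follows the same coordinate-comparison approach as the paper's, which likewise reads $v=v'$ off the $y$-coordinates of the bends and the common path of $u,u'$ off their $x$-coordinates. Your version is in fact more complete: you also check the ``if'' direction, and your parity argument ruling out the mixed case $X(u)+1=X(u')-1$ (using that $X(v)$ is even and would have to lie strictly between $X(u)$ and $X(u)+2$) closes a step that the paper's one-line assertion $X(u)\pm 1=X(u')\pm 1\Rightarrow X(u)=X(u')$ passes over silently.
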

\noindent
In the case described by the above lemma, two edges have overlapping segments $(b_e,v)$ and $(b_{e'},v)$.  We consider this feature acceptable, or even desirable for two edges that have the same endpoint. This typical merging of edges has been used in the past, see for example \cite{Bannister:2015:COD:2951136.2951186,JGAA-391,DBLP:conf/gd/PupyrevNK10}. However, in case that this feature is not desirable, we propose two alternative solutions that avoid this overlap.  The price to pay is larger area, or less edges drawn: 
\begin{enumerate}[leftmargin=*]
	\item  Larger area option:  We can shift horizontally by one unit the position of bend $b_{e'}$ and all the vertices $v$ and bends $b$ such that $X(v)>X(b_e)$ and $X(b)>X(b_e)$. In this case we have no overlaps, but the area of $\Gamma$ can be as large as $O(knm)$.
	\item  Less edges option:  We can define the path decomposition graph differently by removing some transitive cross edges from $H$. For every vertex $v$ we remove the edge $(u,v)$ if there exists an edge $(u',v)$ such that $u'$ and $u$ are in the same decomposition path $P$ and $u$ precedes $u'$ in the order of $P$. It is easy to prove that $H'$ is a subgraph of $H$ and that $A - A'$ contains only transitive edges of $G$. By definition of $H'$, given a decomposition path $P$, for any vertex $v$ there exists at most one cross edge $e=(u,v)$ such that $u\in P$. According to Lemma \ref{lemma:overlap}, there are no bends overlapping. The area of a drawing $\Gamma$ computed using $H'$ is $Area(\Gamma')=O(kn)$. However, we pay for the absence of overlapping bends by the exclusion from the drawing of some transitive cross edges of $G$.
\end{enumerate}
In Figure \ref{figure:overlap} we show an example of the edge overlap described above in a drawing of $H$. Part (a) shows a simple drawing where two edges, $e_1=(u,v)$ and $e_2=(u',v)$, overlap. In grey the drawings of $e_1,e_2$ as straight lines, please notice that both of them intersect a vertex. In part (b) we shift horizontally the drawing, removing the overlap but, of course, increasing the area. Part (c) shows the drawing of $H'$, where edge $(u,v)$ is removed since $u'$ has a higher order in their path.
\begin{figure}[ht]
	\centering
	\subfigure[]
	{\includegraphics[width=0.23\linewidth]{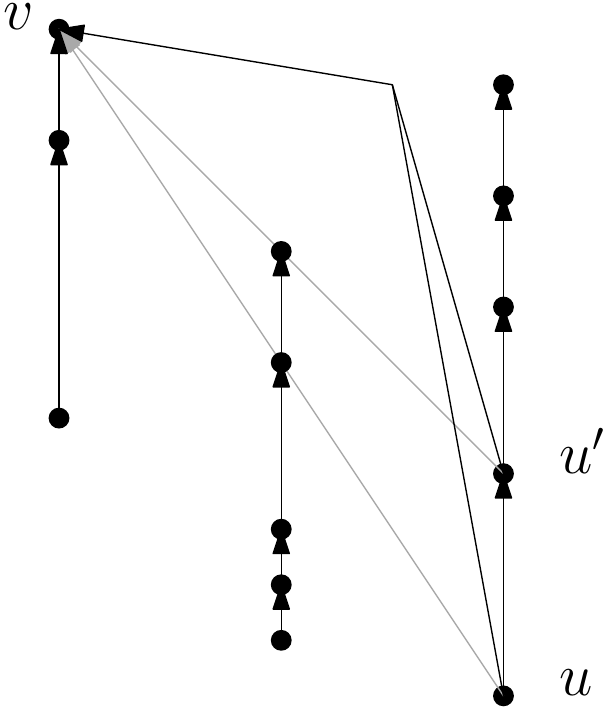}}
	\hspace{15mm}
	\subfigure[]
	{\includegraphics[width=0.26\linewidth]{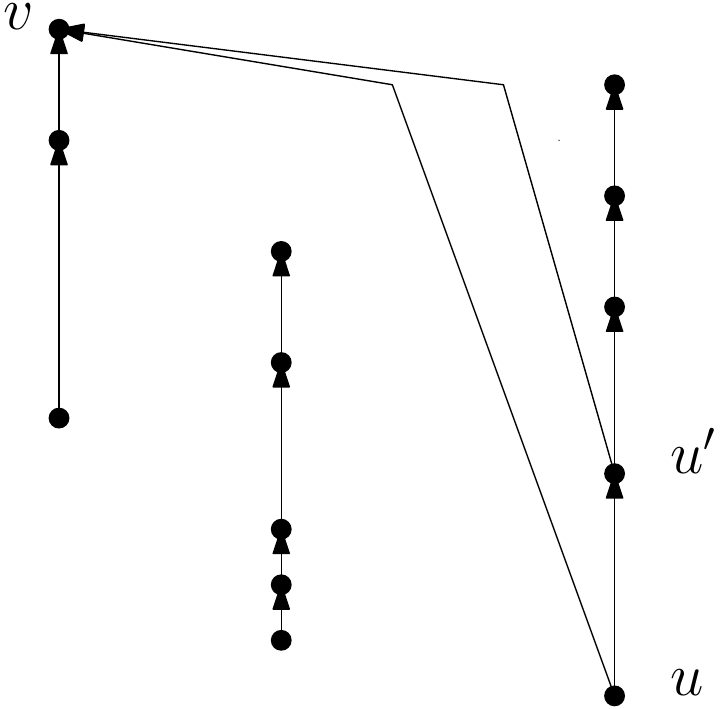}}
	\hspace{15mm}
	\subfigure[]
	{\includegraphics[width=0.23\linewidth]{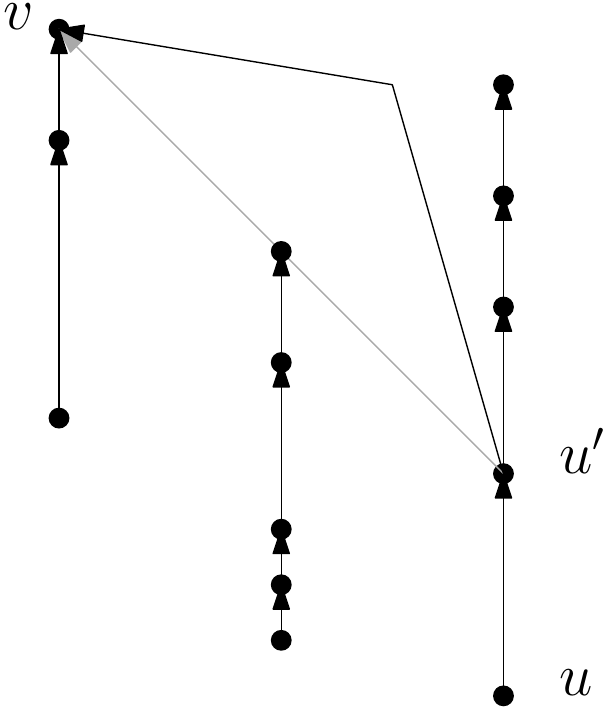}}
	\caption{Examples of bend and edge overlaps in a drawing of $H$.}
	\label{figure:overlap}
\end{figure}

Alternatively we propose to draw every cross edge with a bend. In this case we can avoid Step 6, so we can obtain the drawing in $O(n+m)$. Of course, we pay the reduced time complexity by having more bends in the drawing.

Notice that graphs $H$ and $H'$ are computed from $G$ by simply removing some transitive edges.  Hence we have the following:
\begin{theorem}
	\label{theorem:HH'}
	The path decomposition graphs $H$ and $H'$ have the same reachability properties of $G$.
\end{theorem}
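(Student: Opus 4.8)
The plan is to show that $G$, $H$, and $H'$ all induce the same reachability relation on $V$ (that is, for every $u,v\in V$ there is a directed path from $u$ to $v$ in one of the graphs iff there is one in each of the others). I would first isolate the elementary principle on which everything rests: if a directed graph $G_2$ is a spanning subgraph of a directed graph $G_1$ and the two endpoints of every edge of $G_1$ missing from $G_2$ are still joined by a directed path in $G_2$, then $G_1$ and $G_2$ have the same reachability relation. One direction is immediate because $G_2\subseteq G_1$; for the other, take any directed $G_1$-path and replace each occurrence of a missing edge by its $G_2$-detour, obtaining a directed walk in $G_2$, which contains a directed $G_2$-path with the same endpoints. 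It then suffices to verify the hypothesis of this principle for the pair $(G,H)$ and for the pair $(H,H')$.

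For $(G,H)$: by the definition of the path graph, the edges of $G$ missing from $H$ are exactly the edges $e=(u,v)$ joining two non-consecutive vertices $u=v_i^a$ and $v=v_i^b$ of one decomposition path $P_i$. Since $P_i$ is a directed path and $G$ is acyclic, $a<b$ (the case $a>b$ would close a cycle together with the sub-path of $P_i$ from $v$ to $u$, and $a=b$ is excluded because $u\neq v$). Hence the path edges $v_i^{a}\to v_i^{a+1}\to\cdots\to v_i^{b}$ all lie in $H$ and form the required directed path from $u$ to $v$.

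For $(H,H')$: an edge $e=(u,v)$ is deleted from $H$ to form $H'$ only because there is a cross edge $(u',v)$ with $u'$ on the same decomposition path $P$ as $u$ and $u$ preceding $u'$ on $P$. Let $w$ be the vertex of largest index on $P$ that has a cross edge into $v$; since no vertex of $P$ lies strictly after $w$, the edge $(w,v)$ triggers no deletion and therefore survives in $H'$. Because $u$ precedes or equals $w$ on $P$, the path edges of $P$ from $u$ to $w$ belong to $H'$, and appending the edge $(w,v)$ produces a directed path from $u$ to $v$ in $H'$. Applying the principle twice gives the same reachability relation for $G$, $H$, and $H'$.

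I expect the only delicate point to be the $(H,H')$ step: several cross edges into the same vertex $v$ coming from the same path $P$ may be deleted simultaneously, so one cannot reroute $e$ through the particular witness $(u',v)$, which may itself have been deleted; routing instead through the maximum-index cross edge from $P$ into $v$ repairs this, since that one is guaranteed to survive. The remaining ingredients---the spanning-subgraph rerouting principle and the acyclicity argument fixing the orientation $a<b$ along a decomposition path---are routine.
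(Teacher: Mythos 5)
Your proof is correct and follows the same idea the paper uses, namely that $H$ and $H'$ arise from $G$ by deleting only transitive edges, so reachability is preserved by rerouting each deleted edge along a surviving detour (the paper states this in one line without proof). Your elaboration is sound, and your care in the $(H,H')$ step --- rerouting through the maximum-index cross edge from $P$ into $v$ rather than through an arbitrary witness, which may itself be deleted --- correctly handles a detail the paper glosses over.
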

\noindent
Theorem \ref{theorem:HH'} is rather simple, but it is very important, since it tells us that visualizing a hierarchical drawing of $H$ or $H'$ we do read and understand correctly any reachability relation between the vertices of $G$.

A path decomposition $S_p$ of a DAG $G$ with a small number of paths lets us compute a readable PCH drawing of $G$, since the number of decomposition paths influences the area of the drawing and its number of bends. Indeed, since a cross edge can intersect at most one vertex of every decomposition path, the number of decomposition paths influences the number of bends of the drawing.  Furthermore, the number of paths $k$ clearly influences the time to find the minimum number of crossings between cross edges and paths, as it is described in the Appendix. Several algorithms solve the problem of finding a path decomposition of minimum size \cite{DBLP:journals/siamcomp/HopcroftK73,DBLP:conf/recomb/KuosmanenPGCTM18,DBLP:conf/stoc/Orlin13,DBLP:journals/siamcomp/Schnorr78}. The algorithm  of \cite{DBLP:conf/recomb/KuosmanenPGCTM18} is the fastest one for sparse and medium DAGs. In the next section we introduce a relaxed definition of path, the channel, and a way to obtain hierarchical drawings based on a channel decomposition. Notice that, since paths are constrained versions of channels, we expect the minimum size of a channel decomposition to be lower than or in the worst case  equal to the minimum size of a path decomposition. Therefore, we now turn our attention to the concept of a channel decomposition.
\section{Channel Constrained Hierarchical Drawing}
Let $G=(V,E)$ be a DAG. A \emph{channel} $C$ is an ordered set of vertices such that any vertex $u\in C$ has a path to each of its successors in $C$. In other words, given any two vertices $v,w\in C$, $v$ precedes $w$ in the order of channel $C$ if and only if $w$ is reachable from $v$ in $G$. A channel can be seen as a generalization of a path, since a path is always a channel, but a channel may not be a path. A \emph{channel decomposition} $S_c=\{C_1,...,C_k\}$ is a partition of the vertex set $V$ of the graph into channels. If vertex $v$ belongs to channel $C_i$ we write $v_i^j$ if $v$ is the $j$th vertex of channel $C_i$.  The channel decomposition graph $H''$ and a \emph{channel constrained hierarchical drawing} (CCH drawing) of $G$ are defined in a similar fashion as we defined the path decomposition graph $H$ and the PCH drawing of $G$ in the previous section. Notice that, since the channel is a generalization of a path, the concepts of channel decomposition graph and CCH drawing are a generalization of the concepts of path decomposition graph and PCH drawing. We can define Algorithm CCH-Draw in a similar fashion as Algorithm PCH-draw, and its pseudocode is similar to the pseudocode of Algorithm PCH-draw.  The only difference is that Algorithm CCH-Draw takes as input a channel decomposition instead of a path decomposition and that its output is a CCH drawing instead of a PCH drawing. Algorithm CCH-Draw is clearly a generalization of Algorithm PCH-Draw.  Because of space limitations we do not discuss the complete details of Algorithm CCH-Draw here.\\

\indent
Now we introduce a "special" transitive closure, called compressed transitive closure, which is based on the concept of channel decomposition. This transitive closure is obtained from an ordinary transitive closure by removing some of its transitive edges.  Next, we will define a graph $Q$, based on the compressed transitive closure, that will let us obtain more readable drawings.
\\
\noindent
\textbf{Compressed Transitive Closure (CTC):}  Let $L_v$ be a list of vertices associated with a vertex $v\in V$ such that: $L_v$ contains at most one vertex of any decomposition channel; a vertex $w$ is reached from $v$ in $G$ if and only if list $L_v$ contains a vertex $w'$ such that: $w$ and $w'$ are in the same decomposition channel and $w'$ precedes $w$ in the order of that decomposition channel.

The \emph{compressed transitive closure} (CTC) of $G$ is the set of all the lists $L_v$. In \cite{DBLP:journals/tods/Jagadish90} it is shown how to compute the CTC of a graph in $O(mk)$ time. Next we show how we can store the CTC in $O(nk)$ space and that it contains the complete reachability information of $G$.\\
\indent 
We define the \emph{compressed transitive closure graph (CTC graph)} $Q=(V,I)$ such that $(u,v)\in I$ if and only if $u$ is the highest vertex in the order of its channel such that $v\in L_u$. Notice that an edge of $Q$ may not exist in the original graph $G$, as is the case in the ordinary transitive closure graph $G_c$ of $G$.  Furthermore, an edge of $G$ may not be included in $Q$, while $G_c$ contains all the edges of $G$.  Please notice that $Q$ has the same reachability properties (i.e., the same transitive closure) as $G$, since it is computed directly from the CTC of $G$. We denote by \emph{channel edge} an edge of $Q$ connecting two vertices of the same channel, else it is a \emph{cross edge}, similar to the definition of the previous section.

Let $u_i^j$ be a vertex. The list $L_u$ contains by definition the vertex $v_i^{j+1}$, since it is the lowest vertex in the channel $C_i$ reachable from $u$. Hence we have the following property:
\begin{lemma}
	\label{lemma:mc-path}
	$(u,v)\in I$ for any $u_i^j, v_i^{j+1}$.
\end{lemma}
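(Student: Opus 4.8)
The plan is a direct unfolding of the three definitions in play: the order on a channel (two vertices of a channel are ordered by reachability in $G$), the compressed transitive closure (the lists $L_v$), and the edge set $I$ of $Q$, which retains, for each vertex $v$ and each channel, only the edge coming from the \emph{highest} vertex of that channel whose list contains $v$. Writing $u=u_i^j$ and $v=v_i^{j+1}$, the goal is to verify the membership condition for $I$: that $v\in L_u$, and that no vertex of $C_i$ higher than $u$ in the channel order has $v$ in its list.

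First I would argue $v\in L_u$. Since $v$ is the immediate successor of $u$ in $C_i$, the definition of a channel gives a path from $u$ to $v$ in $G$, so $v$ is reachable from $u$; by the defining property of the CTC, $L_u$ then contains some vertex $w'=v_i^{\ell}$ of $C_i$ preceding (or equal to) $v$ in the channel order, so $\ell\le j+1$. Conversely, every element of a list $L_x$ is reachable from $x$ (apply the CTC property with $w=w'$), so $w'=v_i^{\ell}$ is reachable from $u=u_i^j$; as both lie in $C_i$ and a channel is ordered by reachability, this forces $\ell\ge j+1$. Hence $\ell=j+1$, i.e.\ $w'=v$ and $v\in L_u$; since a CTC list holds at most one vertex per channel, $v$ is the unique vertex of $C_i$ in $L_u$. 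This reproves the remark stated just before the lemma.

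Second I would check maximality. Let $x=v_i^p$ be any vertex of $C_i$ with $v\in L_x$; then $v$ is reachable from $x$, and since $x$ and $v$ both lie in $C_i$, the channel order gives $p\le j$. Hence no vertex of $C_i$ higher than $u=v_i^j$ in the channel order contains $v$ in its list, which together with $v\in L_u$ says precisely that $u$ is the highest vertex of its channel with $v$ in its list. By the definition of $Q=(V,I)$, this gives $(u,v)\in I$.

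I do not expect a genuine obstacle: the claim is a bookkeeping consequence of the definitions, and the remark preceding it already isolates the key observation. The single point deserving care is reading ``$w'$ precedes $w$'' in the CTC definition as $w'\preceq w$ with equality allowed, so that every list entry self-certifies its own reachability; this reflexive reading is what powers both the ``$\ell\ge j+1$'' step and the ``$p\le j$'' step, so I would make it explicit before running those two comparisons.
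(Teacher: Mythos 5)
Your proof is correct and follows the same definition-unfolding route as the paper, which in fact offers only the one-sentence remark preceding the lemma ($v_i^{j+1}\in L_u$ because it is the lowest vertex of $C_i$ reachable from $u$) and never explicitly checks the second half of the membership condition for $I$. Your verification of maximality --- that no vertex of $C_i$ above $u$ can have $v$ in its list, since $v\in L_x$ forces $v$ to be reachable from $x$ and hence $x$ to precede $v$ in the channel order --- is exactly the bookkeeping the paper elides, and your explicit reflexive reading of ``precedes'' in the CTC definition is the right way to make both comparisons airtight.
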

\noindent
Lemma \ref{lemma:mc-path} implies that the channel decomposition $S_c$ of $G$ is a path decomposition of $Q$, so a CCH drawing of $Q$ is essentially also a PCH drawing and hence we can compute it using Algorithm CCH-Draw or Algorithm PCH-Draw since in this case the two algorithms produce the same result. 

In Figure \ref{figure:2060} an example of a CCH drawing of $G$ computed by Algorithm CCH-Draw using $Q$ as an input is shown: Part (a) shows the original graph $G$ drawn as computed by Tom Sawyer Perspectives that uses the Sugiyama Framework. A channel decomposition of this graph is $S_c=\{C_1,C_2,C_3,C_4\}$, where: $C_1=\{0,2,3,7,8,12,15,16,19\}$; $C_2=\{1,4,9,17\}$; $C_3=\{5,10,13,18\}$; $C_4=\{6,11,14\}$. In part (b) we show the drawing of $Q$ as computed by Algorithm CCH-Draw. The dotted edges are edges that do not exist in $G$. Some channel edges are dotted, since a channel may not be a path of $G$.
\begin{figure}[ht]
	\centering
	\subfigure[]
	{\includegraphics[width=0.33\linewidth]{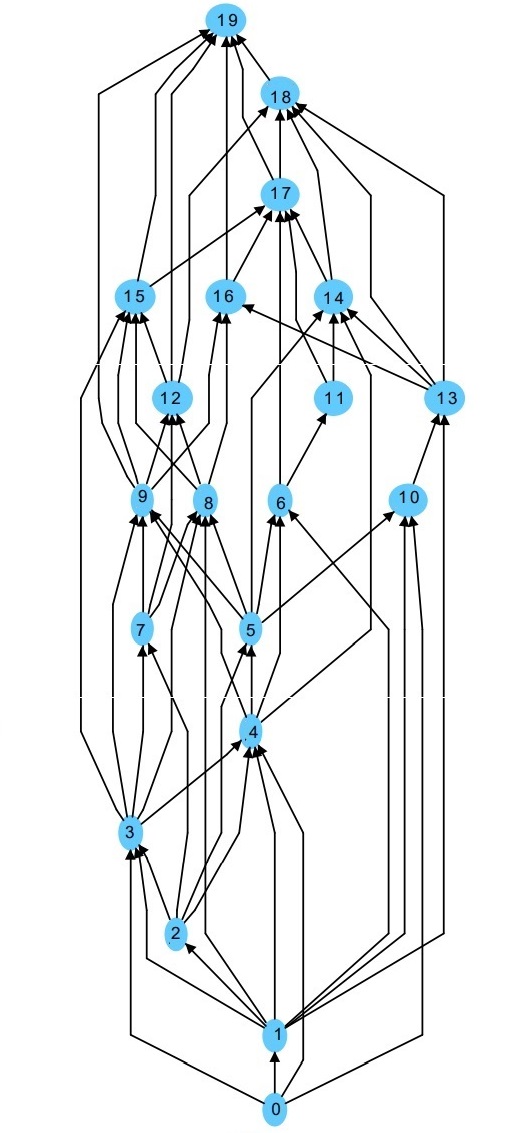}}
	\hspace{15mm}
	\subfigure[]
	{\includegraphics[width=0.29\linewidth]{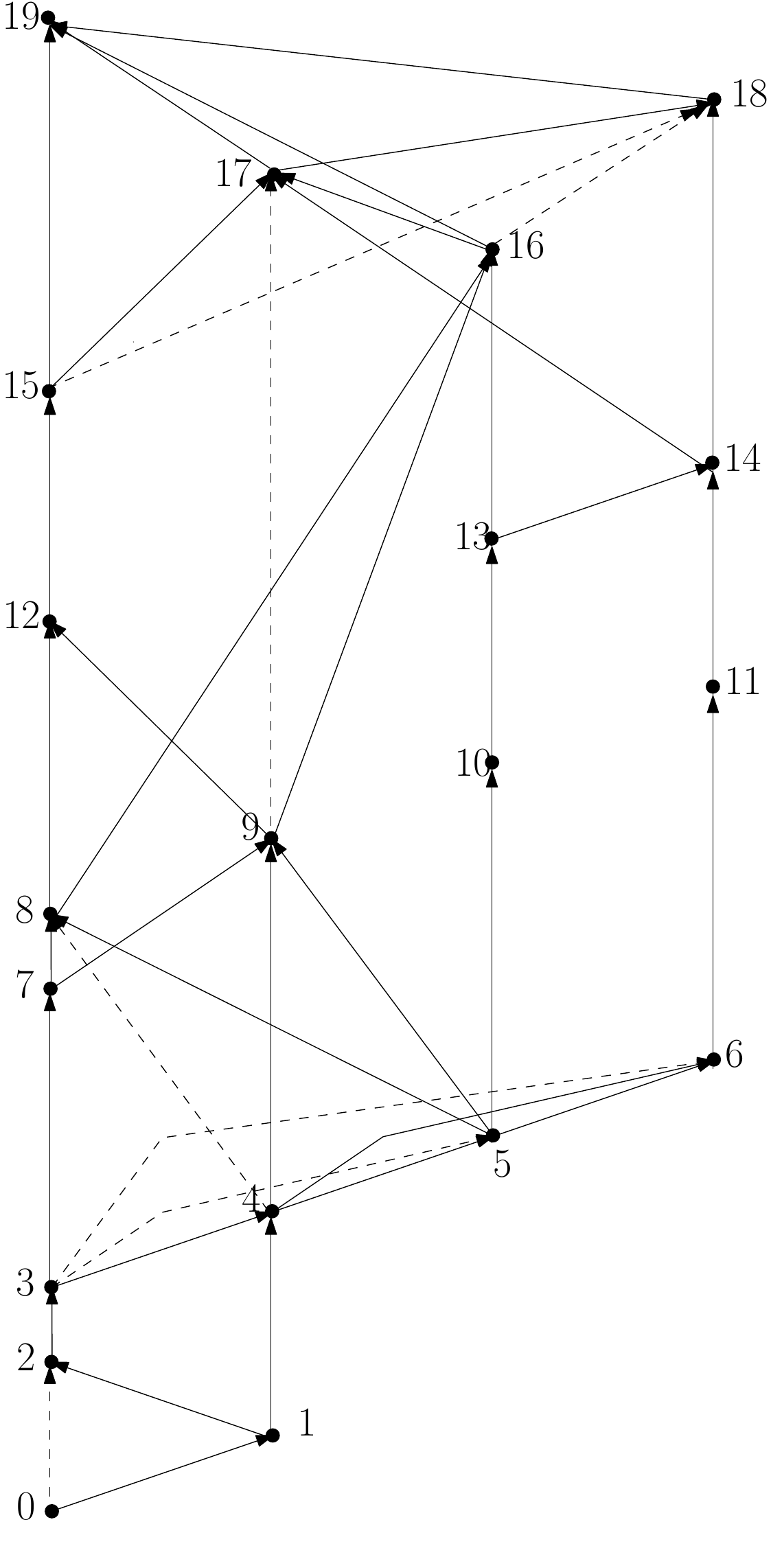}}
	\caption{(a)Drawing of a dag $G$ computed by Tom Sawyer Perspectives (a tool of Tom Sawyer Software) (b) CCH drawing of $Q$ computed by Algorithm CCH-draw.}
	\label{figure:2060}
\end{figure}

There is one list $L_v$ for every vertex $v$ and every list contains $O(k)$ elements.
Since every element of a list $L_v$ corresponds to (at most) one edge of $Q$ we have that $Q$ contains $O(nk)$ edges. Hence we have the following:
\begin{theorem}
	\label{theorem:number}
	The number of edges of $Q$ is $O(nk)$.
\end{theorem}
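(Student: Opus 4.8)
The plan is to prove the bound by a direct charging argument: every edge of $Q$ is charged to a distinct entry of the compressed transitive closure, and the total number of such entries is easily seen to be $O(nk)$.

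First I would recall the two structural facts about the CTC supplied above. There is exactly one list $L_v$ for each of the $n$ vertices $v\in V$; and, by the defining property of the CTC, each list $L_v$ contains at most one vertex from each of the $k$ decomposition channels, so $|L_v|\le k$. Consequently $\sum_{v\in V}|L_v|\le nk$, which in passing also justifies the $O(nk)$ storage bound mentioned just before the statement.

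Next I would exhibit an injection from the edge set $I$ into the set of all CTC entries. By the definition of $Q=(V,I)$, an edge $(u,v)$ belongs to $I$ only if $v\in L_u$; so I send $(u,v)$ to the entry ``$v$ in $L_u$''. To check injectivity, suppose $(u,v)$ and $(u',v')$ are sent to the same entry: lying in the same list forces $u=u'$ (the lists are in bijection with the vertices), and being the same entry of that list forces $v=v'$. Hence $|I|$ is at most the number of CTC entries, i.e. $|I|\le\sum_{v\in V}|L_v|\le nk$, so $|I|=O(nk)$.

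I do not expect a genuine obstacle here; the argument is essentially the one already sketched in the paragraph preceding the theorem. The only point worth stating carefully is the injectivity claim — that a single entry ``$v$ in $L_u$'' cannot be the image of two different edges — and this is immediate because both endpoints of the edge can be read off from the entry (the owner $u$ of the list and the recorded vertex $v$). No case analysis on whether a channel is a path, or on which vertex of a channel is ``highest'', is needed for the upper bound.
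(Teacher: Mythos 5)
Your argument is correct and is essentially the paper's own proof: the paper also bounds $|I|$ by noting that each of the $n$ lists $L_v$ has at most $k$ entries and that each entry accounts for at most one edge of $Q$, giving $O(nk)$. Your explicit injection from edges to CTC entries just spells out the same charging step in slightly more detail.
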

\noindent 
The above theorem implies that the number of edges of $Q$ is linear if $k$ is a constant. Also, it requires only $O(nk)$ space to be stored.  As we did in the previous section, we denote by \emph{cross edge} an edge of $Q$ connecting two vertices belonging to two different channels. We denote by \emph{mono channel path} (mc-path) a path of $Q$ such that all the edges of it are in the same channel, while we denote by \emph{double channel path} (dc-path) a path of $Q$ composed by two mc-paths and a cross edge. 
\begin{theorem}
	\label{theorem:mc-dc}
	Let $v$ and $u$ be any pair of vertices such that $v$ is reachable from $u$. Then there exists either an mc-path or a dc-path from $u$ to $v$ in $Q$.
\end{theorem}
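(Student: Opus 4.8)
The plan is to split on whether $u$ and $v$ belong to the same decomposition channel. The same-channel case should fall straight out of Lemma~\ref{lemma:mc-path}; the cross-channel case I would handle by reading off a witnessing dc-path directly from the compressed transitive closure of $G$.

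In the same-channel case, say $u,v\in C_i$: since $v$ is reachable from $u$ in $G$, the definition of a channel places $u$ before $v$ in the order of $C_i$ (the case $u=v$ being a trivial empty path). Writing $u=v_i^{a}$ and $v=v_i^{b}$ with $a\le b$, Lemma~\ref{lemma:mc-path} gives $(v_i^{t},v_i^{t+1})\in I$ for every $t$, so $v_i^{a}\to v_i^{a+1}\to\cdots\to v_i^{b}$ is a path of $Q$ whose edges are all channel edges of $C_i$ --- an mc-path from $u$ to $v$.

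In the cross-channel case, say $u\in C_i$ and $v\in C_j$ with $i\ne j$: I would first invoke the defining property of the CTC. Since $v$ is reachable from $u$, the list $L_u$ contains a vertex $v'$ of $C_j$ that precedes $v$ in the order of $C_j$ (or coincides with $v$), and $v'$ is the unique entry of $L_u$ in $C_j$ because $L_u$ holds at most one vertex per channel. By Lemma~\ref{lemma:mc-path}, the stretch of $C_j$ from $v'$ to $v$ is an mc-path in $Q$. Next I would take $u'$ to be the highest vertex of $C_i$, in the order of $C_i$, with $v'\in L_{u'}$; this set is nonempty since it contains $u$, so $u'$ exists, and by the definition of $Q$ we get $(u',v')\in I$ --- a cross edge. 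Because $u'$ is the highest such vertex and $v'\in L_u$, the vertex $u$ is not later than $u'$ in the order of $C_i$, so Lemma~\ref{lemma:mc-path} again supplies an mc-path of $Q$ along $C_i$ from $u$ to $u'$. Concatenating the mc-path $u\to\cdots\to u'$, the cross edge $(u',v')$, and the mc-path $v'\to\cdots\to v$ produces a dc-path from $u$ to $v$; the degenerate possibilities $u=u'$ and/or $v'=v$ are absorbed by regarding a single vertex as a trivial mc-path, so that a lone cross edge $(u,v)$ of $Q$ also counts as a dc-path.

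The only step that needs genuine care, I think, is the pair of order comparisons in the cross-channel case: that $v'$ precedes (or equals) $v$ in $C_j$, and that $u$ is not later than $u'$ in $C_i$. These are exactly what guarantee that both channel stretches run in the forward (topological) direction and therefore really are paths of $Q$. Each reduces to the monotonicity underpinning the CTC --- advancing along a channel can only shrink the reachable set, hence can only push each per-channel representative later --- together with reading ``highest in the order of its channel'' in the definition of $Q$ as the \emph{last} vertex of that channel whose list still contains $v'$. Once those facts are pinned down, the rest is bookkeeping: stitch the three pieces together and cite Lemma~\ref{lemma:mc-path} for the channel edges.
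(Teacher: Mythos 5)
Your proof is correct and follows the same route as the paper: split on whether $u$ and $v$ share a channel, dispose of the same-channel case via Lemma~\ref{lemma:mc-path}, and in the cross-channel case extract a cross edge $(u',v')\in I$ with $u'$ a successor of $u$ in $C_i$ and $v'$ a predecessor of $v$ in $C_j$, then concatenate the two mc-paths with that edge. You actually supply more detail than the paper on why $u'$ and $v'$ exist and why the two channel stretches run forward, which the paper leaves to ``by definition of $Q$.''
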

\begin{proof}
	Suppose that $u$ and $v$ are in the same channel $C_i$. In this case there exists an mc-path from $u$ to $v$ as a consequence of Lemma \ref{lemma:mc-path}. Suppose that $u$ and $v$ are in two different channels $C_i$ and $C_j$. If $u$ reaches $v$, by definition of $Q$, there must be a vertex $u'\in C_i$ that is a successor of $u$ in $C_i$ and a vertex $v'\in C_j$ which is predecessor of $v$ in $C_j$, such that $(u',v')\in I$. Let $p_1$ be the mc-path from $u$ to $u'$ and $p_2$ be the mc-path from $v'$ to $v$. The path $p=p_1+(u',v')+p_2$ is a dc-path from $u$ to $v$.
\end{proof}
\noindent
We claim that such a CCH drawing of $Q$ is a very useful instrument to visualize the reachability properties of $G$. Indeed, if we want to check if a vertex reaches another vertex in $Q$ (and consequently in $G$) we just need to check if there exists an mc-path or a dc-path connecting them (Lemma \ref{theorem:mc-dc}). Moreover, finding an mc-path or a dc-path in $\Gamma$ is very easy, since every mc-path is drawn on a vertical line and every dc-path is drawn as two different vertical lines connected by a cross edge. Moreover since $Q$ has an almost-linear number of edges ($O(nk)$) by Theorem \ref{theorem:number} it makes $Q$ easier to visualize and so it gives us a clear way to visualize the reachability properties of $G$.  The price we have to pay is that we do not visualize many edges of the original graph $G$. These edges can be visualized (in red) on demand by moving the mouse over a given query vertex.

A channel decomposition with a small number of channels lets us compute a readable CCH drawing of $G$. The width $b$ of a DAG $G$ is the maximum cardinality of a subset of $V$ of pairwise incomparable vertices of $G$, i.e., there is no path between any two vertices in the subset.  In \cite{dilworth} it is proved that the minimum value of the cardinality of $S_c$, is $b$ and in \cite{DBLP:journals/tods/Jagadish90} an algorithm is given to compute $S_c$ with $k=b$ in $O(n^3)$ time. The time complexity is improved to $O(bn^2)$ in \cite{chendag}. Clearly, since paths are a restricted type of channels, the minimum size of $S_c$ is less than or equal to the minimum size of $S_p$. 
\section{Comparisons and Conclusions}
We discussed the results of our algorithms in terms of bends and area. The framework we present in this paper produces results that are far superior to the results produced by the Sugiyama framework with respect to the number of crossings, number of bends, area of the drawing and visual clarity of the existing paths and reachability. Namely, because the hierarchical drawings produced by the Sugiyama framework have (a) many crossings (a bound is not possible to be computed), (b) the total number of bends is large and it depends heavily on the number of dummy vertices introduced, (c) the area is large because the width of the drawing is negatively influenced by the number of dummy vertices, (d) the number of bends per edge is also influenced by the number of dummy vertices on it (although the last phase tries to straighten the edges by aligning its segments, at the expense of the area, of course), (e) most problems and subproblems of each phase are NP-hard, and many of the heuristic are very time consuming, and (f) the reachability information in the graph is not easy to detect from the drawing.\\
\indent 
Our framework produces hierarchical drawings that are far superior of the ones produced by the Sugiyama framework in all measures discussed above.  Namely, our drawings have (a) a minimum number of channel crossings as an upper bound (see the Appendix), (b) the total number of bends is low since we introduce at most one bend for some (not all) cross edges, (c) the area is precisely bounded by a rectangle of height $n-1$ and width $O(k)$, where $k$ is typically a small fraction of $n$, (d) the reachability and path information is easily visible in our drawings since any path is deduced by following at most one cross edge (which might have at most one bend), (e) the vertices in each channel are vertically aligned and there is a path from each vertex
in the channel to all the vertices that are at higher $y$-coordinates, (f) all our algorithms run in polynomial time (with the exception of the minimization of the number of channel crossings, which requires $O(k! k^2)$ time), and finally, (g) the flexibility of our framework allows a user to decide to have their specified paths as channels, thus allowing for user paths to be drawn aligned.\\
\indent
The only drawback of the drawings produced by our framework is the fact that it does not draw all the edges of the graph, which might be important for some applications.  This might be considered as an advantage by some other users since it offers drawings that are not cluttered by the edges. In any case, we offer the  remedy to visualize all the edges incident to a vertex interactively when the mouse is placed on top of a vertex.\\
\indent
We believe that the above comparison is convincing of the power of the new framework. Hence we do not offer experimental results here.  However, in the future we plan to contact user studies in order to verify that the users we will benefit  from the aforementioned properties by showing higher understanding and ease of use of the new drawing framework. We plan to work on allowing to include user specified channels (or paths), and still find the minimum number of channels in a channel decomposition. It would be interesting to find specific topological orderings and/or sophisticated layer assignment that will reduce the hight, the number of crossings and the number of bends  of the computed drawing.  Finally, it would be desirable to avoid the exponential in $k$ (i.e., $k!$) factor in the time complexity of finding the best order of the channels.
	\bibliography{literature}
	\chapter*{Appendix}
\subsection*{Proof of Lemma \ref{lemma:no_intersection}}
If $e$ is drawn as a straight line the lemma is true by the construction of Algorithm PCH-Draw. Otherwise, $e$ is composed of two segments: $(u,b_e)$ and $(b_e,v)$. Both segments are diagonals of the rectangles inside which there is no vertices, so the two segments, and consequently $e$, cannot intersect any vertex different from $u,v$.
\subsection*{Proof of Lemma \ref{lemma:overlap}}
$X(u)=X(u')$ since $X(b_e)=X(b_{e'})=X(u)\pm 1=X(u')\pm 1$. In this case $u$ and $u'$ are in the same path of the path decomposition. $v=v'$ since $Y(b_e)=Y(b_{e'})=Y(v)-1=Y(v)-1$ and since there is no vertex $w\not =v$ such that $Y(w)=Y(v)$.
\section*{Minimizing the Number of Crossings}
\label{section:crossings}
We denote by \emph{channel crossing} a crossing between a channel (edge) and a cross edge. In this section we discuss how we can reduce the number of channel crossings of a CCH drawing computed by Algorithm CCH-Draw, by changing the left-to-right order of the channels. Notice that because of the similarities of CCH and PCH drawings, the techniques we will describe for CCH drawings are applicable to the PCH drawings as well.  We will discuss the results with respect to the compressed transitive closure graph $Q=(V,I)$, but the same results can be obtained with any decomposition graph $H$.

Let $X(C)$ be the x-coordinate of the vertices of a channel $C$ in the CCH drawing $\Gamma$. Let $C_i$ and $C_j$ be two different channels. Let $k_{ij}$ be the number of channels that lie between $C_i$ and $C_j$, e.g., $X(C_i)<X(C)<X(C_j)$. Let $I_{ij}$ be the set of the cross edges connecting any two vertices of $C_i$ and$C_j$, where $|I_{ij}|=m_{ij}$. Clearly, every edge of $I_{ij}$ can be involved in at most $k_{ij}$ channel crossings, since this is the number of channels between its beginpoint and its endpoint.  Hence, $m_{ij}k_{ij}$ is the maximum number of crossings of the set of edges $I_{ij}$. Since the set of all possible $I_{ij}$ is a partition of $I$ then we have the following:
\begin{lemma}
	$\theta=\sum_{C_i,C_j\in (S_c,S_c)} m_{ij}k_{ij}$ is an upper bound on the number of channel crossings of a drawing $\Gamma$.
\end{lemma}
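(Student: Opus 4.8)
The plan is to turn the informal counting that precedes the statement into a clean double-counting argument. By definition every channel crossing is an intersection of exactly one cross edge with exactly one channel edge, so the total number of channel crossings of $\Gamma$ equals $\sum_{e} c(e)$, where the sum runs over all cross edges $e\in I$ and $c(e)$ is the number of channel edges crossed by $e$. Thus it suffices to prove $c(e)\le k_{ij}$ for every $e\in I_{ij}$; summing and using the already-noted facts that the sets $I_{ij}$ partition $I$ and $|I_{ij}|=m_{ij}$ then yields $\sum_e c(e)\le\sum_{C_i,C_j\in(S_c,S_c)} m_{ij}k_{ij}=\theta$.

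First I would record two structural facts about the drawing produced by Algorithm CCH-Draw (equivalently by Algorithm PCH-Draw applied to $Q$, via Lemma~\ref{lemma:mc-path}). (i) Each channel $C$ is drawn as a vertical segment contained in the line $x=X(C)$: its channel edges join vertices that are consecutive in the reachability order of $C$, which refines the topological order $T$, so no other vertex of $C$ has $y$-coordinate strictly between the endpoints of a channel edge; hence channel edges are straight vertical segments, and distinct channels lie on distinct vertical lines. (ii) The drawing of every cross edge $e=(u,v)$ is $y$-monotone: since $(u,v)\in I$ means $v$ is reachable from $u$, we have $Y(u)=T(u)<T(v)=Y(v)$, and both the straight-line drawing and the one-bend drawing (bend at $y$-coordinate $Y(v)-1$) are monotone in $y$.

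Next I would bound $c(e)$ for a fixed $e=(u,v)\in I_{ij}$. A $y$-monotone curve meets any vertical line in at most one point, so by (i) the edge $e$ crosses each channel at most once. Furthermore $e$ crosses neither $C_i$ nor $C_j$: inspecting the two possible drawings, the part of $e$ other than its endpoints stays strictly off the lines $x=X(C_i)$ and $x=X(C_j)$ (in the bent drawing, the first segment has $x$ strictly between $X(u)$ and $X(u)\pm1$ except at $u$, and the second has $x$ strictly between $X(u)\pm1$ and $X(v)$ except at $v$; the straight drawing is $x$-monotone from $X(u)$ to $X(v)$). Finally, whenever $e$ meets the line $x=X(C)$ of a channel $C$ lying strictly between $C_i$ and $C_j$, it does so at a non-vertex point by Lemma~\ref{lemma:no_intersection}, so that point either lies on a channel edge of $C$ (one channel crossing) or outside the vertical extent of $C$ (no crossing). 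Hence $e$ contributes at most one crossing for each of the $k_{ij}$ intermediate channels, i.e. $c(e)\le k_{ij}$, and summing over all cross edges gives $\theta$ as an upper bound.

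The only point that needs genuine care is fact (ii) combined with the ``at most one crossing per intermediate channel'' claim: everything rests on the drawn cross edges being $y$-monotone (so they cannot re-cross a vertical channel line) and on each channel occupying a connected vertical segment, both of which follow directly from the coordinate assignment of Algorithm PCH-/CCH-Draw and from the reachability order within a channel being a refinement of the topological order $T$. The summation and the partition claim $I=\bigcup_{i,j}I_{ij}$ are routine.
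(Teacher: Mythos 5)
Your proof is correct and follows essentially the same route as the paper: the paper's own justification is the informal counting in the paragraph preceding the lemma (each edge of $I_{ij}$ crosses at most the $k_{ij}$ channels strictly between its endpoint channels, and the sets $I_{ij}$ partition $I$), which you have simply made rigorous by verifying $y$-monotonicity of the drawn cross edges and the vertical alignment of channels. No gap; your version just supplies details the paper leaves implicit.
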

\noindent
In order to find the optimum order of $k$ channels that minimizes the upper bound on channel crossings we use a brute force approach. In other words, we simply count the crossings for each of the $k!$ permutations.
Algorithm Best-Order will compute the order of the channels of a drawing $\Gamma$ such that it has a minimum upper bound on the number of crossings, $\theta$. The algorithm takes $S_c$ and $Q$ as input and gives as output an ordered channel decomposition with the order described above. The first step is computing the value $m_{ij}$ for every possible couple of channels of $S_c$. Let $ord$ be an order of the channels of $S_c$, where $ord(C_i)<ord(C_j)$ if $C_i$ precedes $C_j$ in $ord$. Then the algorithm will try all the possible permutations of the channels and for every permutation computes the values $k_{ij}$ for every pair of channels. Then it computes the value $\theta$. Finally the algorithm chooses the order $ord_m$ of the channels with minimum $\theta$ ($\theta_m$) and gives as output the ordered channel decomposition $S_c'$ where the channels are ordered as in $ord_m$.\\

\textbf{Algorithm} Best-Order($Q=(V,I)$,$S_c=\{C_1,...,C_k\}$)\\
1. \indent $m_{ij}=0$ $\forall i,j$\\
2. \indent \textbf{For} any cross edge $e=(u,v)\in I$\\
3. \indent \indent Let $u\in C_i$ and $v\in C_j$\\
4. \indent \indent $m_{ij}++$\\
5. \indent $\theta_m=|E|k$\\
6. \indent $ord_m=void$\\
7. \indent \textbf{For} any possible order $ord$ of $S_c$:\\
8. \indent \indent \textbf{For} any couple of different channels $(C_i,C_j)$:\\
9. \indent \indent \indent $k_{ij}=|ord(C_i)-ord(C_j)|-1$\\
10.\indent \indent $\theta=\sum_{C_i,C_j\in (S_c,S_c)} m_{ij}k_{ij}$\\
11.\indent \indent \textbf{If} $\theta<\theta_m$:\\
12.\indent \indent \indent $\theta_m=\theta$\\
13.\indent \indent \indent $ord_m=ord$\\
14.\indent $S_c':=$ ordered channel decomposition containing the channels of $S_c$\\
15.\indent Order the channels of $S_c'$ as in $ord_m$\\
16.\indent \emph{output} $S_c'$\\

Algorithm Best-Order tries all possible orders of channel decomposition $S_c$ and chooses the one that implies the minimum $\theta$.  Furthermore, since Step 2 requires $O(|I|)$ time and Steps 7 and 8 require $O(k!k^2)$, we have the following:
\begin{theorem}
	$S_c'=$ Best-Order($Q$, $S_c$) is the ordered channel decomposition such that $\Gamma=$ CCH-Draw($S_c'$, $Q$) has minimum $\theta$ among all the channels decomposition having the same channels of $S_c$. Furthermore, it runs in $O(k!k^2+|I|)$ time.
\end{theorem}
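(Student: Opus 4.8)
The plan is to split the statement into a correctness claim (the returned $S_c'$ attains the minimum of $\theta$ over all orderings of the given channels) and a running-time claim, and to verify each by a direct bookkeeping argument over the pseudocode of Algorithm Best-Order, using the fact (from the preceding lemma) that $\theta=\sum_{C_i,C_j} m_{ij}k_{ij}$ is the quantity the algorithm is designed to minimize.

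For correctness I would first check that Steps~1--4 correctly populate $m_{ij}$: the array starts at $0$, and each cross edge $e=(u,v)\in I$ with $u\in C_i$ and $v\in C_j$ is scanned exactly once and increments exactly the entry $m_{ij}$, while channel edges are correctly skipped because both their endpoints lie in one channel. Crucially, $m_{ij}$ depends only on the channel set, not on its left-to-right order, so computing it once before the enumeration is legitimate. Next I would observe that a drawing produced by Algorithm CCH-Draw is determined entirely by the left-to-right order of the channels: Step~2 of CCH-Draw sets $X(C_i)=2i$ from that order, and all remaining coordinates (the $y$-coordinates from the fixed topological order, and the bends) are then forced. Hence the family of drawings CCH-Draw$(S_c'',Q)$, as $S_c''$ ranges over all channel decompositions with the same channels as $S_c$, is exactly the family indexed by the $k!$ permutations of $\{C_1,\dots,C_k\}$, which is precisely what the loop in Step~7 iterates over.

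It then remains to see that, for a fixed order $ord$, the value computed in Step~10 equals the bound $\theta$ of the preceding lemma for the corresponding drawing. Indeed, $k_{ij}=|ord(C_i)-ord(C_j)|-1$ computed in Step~9 is precisely the number of channels placed strictly between $C_i$ and $C_j$ in that drawing, so $\sum_{C_i,C_j} m_{ij}k_{ij}$ in Step~10 coincides with the lemma's quantity (the degenerate pairs $i=j$ contribute $m_{ii}k_{ii}=0$ and are harmless). Since Steps~11--13 maintain the running minimum of $\theta$ over all enumerated orders, and the initial value $\theta_m$ is chosen large enough that at least one update of $ord_m$ genuinely occurs (every order satisfies $\theta\le(k-1)\sum_{i\neq j}m_{ij}\le(k-1)|I|$), the order $ord_m$ returned in Step~16, and hence $S_c'$, attains $\min_{ord}\theta$; this is exactly the claimed optimality.

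For the running time: Step~1 costs $O(k^2)$ to initialize the array, and Steps~2--4 cost $O(|I|)$ since each edge of $I$ is examined once, so the preprocessing is $O(k^2+|I|)$. The enumeration in Step~7 runs over $k!$ orders, and for each order Step~8 loops over the $O(k^2)$ channel pairs with $O(1)$ work each while assembling $\theta$ in Step~10 is a further $O(k^2)$; thus each iteration is $O(k^2)$ and the loop contributes $O(k!\,k^2)$. Reordering the channels in Steps~14--15 is $O(k\log k)$, which is absorbed. Summing yields $O(k!\,k^2+|I|)$. I expect the only point needing care is the observation that CCH-Draw's output, and therefore both the number of channel crossings and the bound $\theta$, depends on the channel decomposition only through its order; once that is pinned down the rest is a routine verification with no genuine obstacle.
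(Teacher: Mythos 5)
Your proposal is correct and follows essentially the same route as the paper, which justifies the theorem simply by noting that Best-Order exhaustively tries all $k!$ channel orders, picks the one with minimum $\theta$, and that Step~2 costs $O(|I|)$ while Steps~7--8 cost $O(k!k^2)$. You merely fill in the routine details (that $m_{ij}$ is order-independent, that Step~10 reproduces the lemma's bound for each permutation, and that the drawing depends only on the channel order), so there is no substantive difference in approach.
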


\noindent
If $k$ is a small number, Algorithm Best-Order is a nice heuristic to reduce the number of channel crossings, since it minimizes the worst case. Suppose on the other hand that $k$ is large. Let $C$ be a channel and let $E(C)$ be the set of cross edges adjacent to it. Suppose that $C$ is placed in the center of the drawing. The number of channel edges crossing an edge of $E(C)$ is at most $k/2$, since it can cross all the channels on the left or on the right of $C$, and in both cases the number of such channels is about $k/2$. In this case the number of channel crossings inolving edges of $E(C)$ is at most $|E(C)|k/2$. On the other hand, if $C$ is placed in one of the two borders of the drawing then an edge of $E(C)$ can cross all the channels of the drawing, in the case that the other endpoint of that edge is in the channel placed in the other border. In this case the number of channel crossings inolving edges of $E(C)$ is at most $|E(C)|k$. As a conclusion we suggest to place the two channels with the least cross edge degree at the border of the drawing. Next, from the remaining channels, we pick the two channels with the least cross edge degree and place them next to the two channels placed recently toward the center, and so on. Consequently the channels with large $|E(C)|$ will be placed in the center of the drawing.

\end{document}